\documentclass[3p,times,a4paper]{my-elsarticle}

\usepackage{a4wide}


\usepackage[colorlinks=true,linkcolor=black,citecolor=blue,urlcolor=blue]{hyperref}

\journal{Applied Mathematics Letters}


\usepackage{amsmath,amssymb,amsthm}
\usepackage{subfig}
\usepackage{graphicx}
\usepackage{epstopdf}


\newtheorem{theorem}{Theorem}[section]

\theoremstyle{remark}

\theoremstyle{definition}
\newtheorem{definition}{Definition}[section]


\def\R{\mathbb{R}}


\begin{document}

\begin{frontmatter}
	
\title{A stochastic SICA epidemic model for HIV transmission}

\author[Serbia]{Jasmina Djordjevic}
\ead{djordjevichristina@gmail.com}

\author[CIDMA]{Cristiana J. Silva}
\ead{cjoaosilva@ua.pt}

\author[CIDMA]{Delfim F. M. Torres\corref{correspondingauthor}}
\cortext[correspondingauthor]{Corresponding author.}
\ead{delfim@ua.pt}

\address[Serbia]{Faculty of Science and Mathematics, University of Ni\v{s},
Vi\v{s}egradska 33, 18000 Ni\v{s}, Serbia}

\address[CIDMA]{Center for Research and Development in Mathematics and Applications (CIDMA),\\
Department of Mathematics, University of Aveiro, 3810-193 Aveiro, Portugal}

	
\begin{abstract}
We propose a stochastic SICA epidemic model for HIV transmission,  
described by stochastic ordinary differential equations, and discuss 
its perturbation by environmental white noise. 
Existence and uniqueness of the global positive solution 
to the stochastic HIV system is proven, and conditions under 
which extinction and persistence in mean hold, are given. 
The theoretical results are illustrated via numerical simulations.
\end{abstract}

\begin{keyword}
SICA epidemic model\sep HIV infection\sep stochastic differential equations\sep 
Brownian motion\sep extinction and persistence.

\MSC[2010] 34F05\sep 60H10\sep 92D30.
\end{keyword}

\end{frontmatter}

	
\section{Introduction}

Epidemics are, inevitably, affected by environmental white noise, which is 
an important component to be taken into account by mathematical models,
providing an additional degree of realism in comparison to their 
deterministic counterparts \cite{ZhaoJiang:ApMathLet:2014}.
Here, our aim is to improve the deterministic 
SICA epidemic model for HIV transmission recently proposed 
in \cite{SilvaTorres:EcoComplexity,SilvaTorres:DCDS:2017},
by considering environmental interactions.
For that, we follow \cite{ZhaoJiang:ApMathLet:2014,Grafton:BMB:2005,%
Gray:SIAM:JAM:2011,Greenhalgh:PhysA:2016,Lu:PhysA:2005,Tornatore}
and introduce stochastic noise in the form of a 
Brownian motion with positive intensity.
The advantage of our model with respect to previous ones in 
\cite{SilvaTorres:EcoComplexity,SilvaTorres:DCDS:2017}
is that we assume fluctuations in the environment, manifesting 
in the transmission coefficient rate, thus making it more biologically 
realistic for the transmission dynamics of HIV/AIDS in a homogeneously 
mixing population of variable size.

The model subdivides human population into four mutually-exclusive
compartments: susceptible individuals ($S$);
HIV-infected individuals with no clinical symptoms of AIDS
(the virus is living or developing in the individuals
but without producing symptoms or only mild ones)
but able to transmit HIV to other individuals ($I$);
HIV-infected individuals under ART treatment (the so called
chronic stage) with a viral load remaining low ($C$);
and HIV-infected individuals with AIDS clinical symptoms ($A$).
The total population at time $t$, denoted by $N(t)$, is given by
$N(t) = S(t) + I(t) + C(t) + A(t)$.
For sake of simplicity, we assume that the associated 
AIDS-induced mortality is negligible. Using the same 
arguments as in \cite{SilvaTorres:DCDS:2017}, 
we consider a force of infection given by 
$\beta \left( I + \eta_C \, C  + \eta_A  A \right)$ 
with $\beta = \displaystyle \frac{\beta_0 \, \mu}{\Lambda}$, 
where $\beta_0$ is the effective contact rate for HIV transmission.
The modification parameter $\eta_A \geq 1$ accounts for the relative
infectiousness of individuals with AIDS symptoms, in comparison to those
infected with HIV with no AIDS symptoms. Individuals with AIDS symptoms
are more infectious than HIV-infected individuals (pre-AIDS) because
they have a higher viral load and there is a positive correlation
between viral load and infectiousness \cite{art:viral:load}.
On the other hand, $\eta_C \leq 1$ translates the partial restoration
of immune function of individuals with HIV infection
that use ART correctly \cite{AIDS:chronic:Lancet:2013}.
All individuals suffer from natural death, at a constant rate $\mu$.
We also assume that HIV-infected individuals, with and without AIDS 
symptoms, have access to ART treatment. HIV-infected individuals with 
no AIDS symptoms $I$ progress to the class of individuals with HIV 
infection under ART treatment $C$ at a rate $\phi$, and HIV-infected 
individuals with AIDS symptoms are treated for HIV at rate $\alpha$.
Individuals in the class $C$ leave to the class $I$ at a rate $\omega$.
Moreover, an HIV-infected individual with AIDS symptoms $A$
that starts treatment moves to the class of HIV-infected individuals $I$,
moving only to the chronic class $C$ if the treatment is maintained.
HIV-infected individuals $I$ with no AIDS symptoms, which do not take
ART treatment, progress to the AIDS class $A$ at rate $\rho$.
Precisely, we consider the model
\begin{equation}
\label{eq:det:model}
\begin{cases}
d S (t) = \left[ \Lambda - \beta \left( I(t) + \eta_C \, C(t)
+ \eta_A  A(t) \right) S(t) - \mu S(t) \right] dt,\\[0.2 cm]
d I (t) = \left[ \beta \left( I(t) + \eta_C \, C(t)
+ \eta_A  A(t) \right) S(t) - \xi_3 I(t) + \alpha A(t) + \omega C(t) \right] dt, \\[0.2 cm]
d C (t) = \left[ \phi I(t) - \xi_2 C(t) \right] dt,\\[0.2 cm]
d A (t) = \left[ \rho \, I(t) - \xi_1 A(t) \right] dt,
\end{cases}
\end{equation}
where $\xi_1 = \alpha + \mu + d$,
$\xi_2 = \omega + \mu$ and $\xi_3 = \rho + \phi + \mu$.
Existence and uniqueness of solution to the deterministic model 
\eqref{eq:det:model} is proved in 
\cite{SilvaTorres:EcoComplexity,SilvaTorres:DCDS:2017},
where it is shown that the system has one disease free equilibrium
when the basic reproduction number is less than one and one endemic
equilibrium when the basic reproduction number is greater than one.
Local and global stability of the equilibrium points of \eqref{eq:det:model}
is also proved in \cite{SilvaTorres:EcoComplexity,SilvaTorres:DCDS:2017}.
Motivated by \cite{Gray:SIAM:JAM:2011}, we consider here
fluctuations in the environment, which are assumed to
manifest themselves as fluctuations in the
parameter $\beta$, so that $\beta \rightarrow  \beta + \sigma \dot{B}(t)$, 
where $B(t)$ is a standard Brownian motion with intensity $\sigma^2 > 0$.
Our stochastic model takes then the following form:
\begin{equation}
\label{eq:model:stoch}
\begin{cases}
d S (t) = \left[ \Lambda - \beta \left( I(t) + \eta_C \, C(t)
+ \eta_A  A(t) \right) S(t) - \mu S(t) \right] dt 
- \sigma \left( I(t) + \eta_C \, C(t)
+ \eta_A  A(t) \right) S(t) dB(t),\\[0.2 cm]
d I (t) = \left[ \beta \left( I(t) + \eta_C \, C(t)
+ \eta_A  A(t) \right) S(t) - \xi_3 I(t) + \alpha A(t) + \omega C(t) \right] dt 
+ \sigma \left( I(t) + \eta_C \, C(t)  + \eta_A  A(t) \right) S(t) dB(t), \\[0.2 cm]
d C (t) = \left[ \phi I(t) - \xi_2 C(t) \right] dt,\\[0.2 cm]
d A (t) = \left[ \rho \, I(t) - \xi_1 A(t) \right] dt.
\end{cases}
\end{equation}

The paper is organized as follows: Section~\ref{sec:2} is devoted to
existence and uniqueness of a global positive solution
to the Stochastic Differential Equation (SDE) \eqref{eq:model:stoch}
(cf. Theorem~\ref{existence}); Section~\ref{sec:3} to conditions for the 
extinction of HIV within the population (cf. Theorem~\ref{extinction});
and Section~\ref{sec:4} to conditions for the persistence in mean
of the disease (cf. Theorem~\ref{persistence theor}). We end with
Section~\ref{sec:5}, illustrating both theoretical results of
extinction and persistence with numerical simulations.


\section{Existence and uniqueness of a positive global solution}
\label{sec:2}

Throughout the paper, let $(\Omega, \mathcal{F}, \{\mathcal{F}\}_{t\geq0}, \mathcal{P})$ 
be a complete probability space with filtration $\{\mathcal{F}\}_{t\geq0}$,
which is right continuous and such that $\mathcal{F}$
contains all $\mathcal{P}$-null sets. The scalar Brownian motion $B(t)$ 
of \eqref{eq:model:stoch} is defined on the given probability space.
Also, we denote $\R_+^4=\left\{(x_1,x_2,x_3,x_4)|x_i>0, i=\overline{1,4}\right\}$.

\begin{theorem}
\label{existence}
For any $t \geq 0$ and any initial value $\left( S(0), I(0), C(0), A(0) \right) \in \R_+^4$, 
there is a unique solution $\left(S(t), I(t), C(t), A(t) \right)$ to 
the SDE \eqref{eq:model:stoch} and the solution remains in $\R_+^4$ 
with probability one. Moreover,
\begin{equation}
\label{eq:NgoesLam:div:mu}
N(t) \to \frac{\Lambda}{\mu} \mbox{ as } t \to \infty,
\end{equation}
where $N(t)= S(t)+I(t)+ C(t)+ A(t)$.
\end{theorem}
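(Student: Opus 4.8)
\section*{Proof proposal}

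The plan is to follow the classical localisation argument of Khasminskii--Mao. Since all coefficients of \eqref{eq:model:stoch} are polynomial, they are locally Lipschitz on $\R_+^4$, so for every initial datum in $\R_+^4$ there is a unique maximal solution $X=(S,I,C,A)$ on a random interval $[0,\tau_e)$, where $\tau_e$ is the explosion time. The whole difficulty is to prove that $\tau_e=\infty$ almost surely and that the solution does not reach the boundary of $\R_+^4$ before then. To this end I would fix $k_0\in\mathbb{N}$ so large that the initial value lies in $[1/k_0,k_0]^4$, and for $k\ge k_0$ introduce the stopping times $\tau_k=\inf\{t\in[0,\tau_e):\ \min_i X_i(t)\le 1/k \text{ or } \max_i X_i(t)\ge k\}$. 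The sequence $(\tau_k)$ is nondecreasing, so $\tau_\infty:=\lim_k\tau_k\le\tau_e$, and it suffices to show $\tau_\infty=\infty$ a.s.; this simultaneously yields non-explosion and strict positivity of all components.

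Before that I would dispose of \eqref{eq:NgoesLam:div:mu}, which is also needed for the a priori bound. Adding the four equations of \eqref{eq:model:stoch}, the two diffusion terms cancel exactly and, neglecting the (assumed negligible) AIDS-induced mortality, the drift of $N=S+I+C+A$ collapses to $\Lambda-\mu N$; hence $N$ solves the linear ODE $dN(t)=(\Lambda-\mu N(t))\,dt$ on $[0,\tau_e)$, so $N(t)=\frac{\Lambda}{\mu}+\big(N(0)-\frac{\Lambda}{\mu}\big)e^{-\mu t}$. This gives \eqref{eq:NgoesLam:div:mu} immediately and, in particular, the a priori bound $0<X_i(t)\le M:=\max\{N(0),\Lambda/\mu\}$ for all $i$ and all $t<\tau_e$, which already rules out explosion of the components to $+\infty$ and will be used repeatedly below.

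For the non-explosion part I would take a Lyapunov function of the form $V(S,I,C,A)=(S-1-\ln S)+(I-1-\ln I)+(C-1-\ln C)+(A-1-\ln A)$, which is nonnegative on $\R_+^4$ and tends to $+\infty$ whenever any component tends to $0$ or to $+\infty$. Applying It\^o's formula and using $X_i\le M$, most terms of $\mathcal{L}V$ are seen at once to be bounded by a constant: the contributions $-\Lambda/S$, $-\mu S$, $-\beta\Phi S$ (with $\Phi:=I+\eta_C C+\eta_A A$), $-\phi I/C+\xi_2$ and $-\rho I/A+\xi_1$ are all $\le$ const, while the remaining drift terms are linear in the bounded components. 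The delicate point is the It\^o correction coming from $-\ln I$, namely $\tfrac12\sigma^2\Phi^2 S^2/I^2$, which is \emph{not} controlled by these naive bounds, since the diffusion coefficient $\sigma\Phi S$ of $I$ does not vanish as $I\to0$. I expect this to be the main obstacle, and the remedy is to treat the infective mass together: because the $C$- and $A$-equations carry no noise, working with $-\ln\Phi$ and $d\Phi=[\cdots]\,dt+\sigma\Phi S\,dB$ turns that correction into $\tfrac12\sigma^2 S^2\le\tfrac12\sigma^2 M^2$, while the associated drift $-\Phi^{-1}[\cdots]$ stays bounded thanks to $I\le\Phi$, $C\le\Phi/\eta_C$, $A\le\Phi/\eta_A$. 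With such a choice of $V$ one gets $\mathcal{L}V\le K$ for a constant $K=K(M,\sigma,\text{parameters})$.

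The argument then closes in the standard way. By Dynkin's formula, $\mathbb{E}\,V\big(X(\tau_k\wedge T)\big)\le V\big(X(0)\big)+K\,\mathbb{E}[\tau_k\wedge T]\le V(X(0))+KT$ for every $T>0$. If $\mathbb{P}(\tau_\infty\le T)>0$ for some $T$, then on $\{\tau_k\le T\}$ at least one component of $X(\tau_k)$ equals $1/k$ or $k$, so $V(X(\tau_k))\ge m_k$ with $m_k\to\infty$; letting $k\to\infty$ forces $\mathbb{P}(\tau_\infty\le T)\,m_k\le V(X(0))+KT$, a contradiction. Hence $\tau_\infty=\infty$ a.s., so the solution is global and never leaves $\R_+^4$, which together with the limit for $N$ established above completes the proof.
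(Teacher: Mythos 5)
Your route (Khasminskii--Mao localisation with $V=\sum_i(x_i-1-\ln x_i)$ and Dynkin's formula) is genuinely different from the paper's argument, and you have correctly identified the real difficulty: writing $\Phi=I+\eta_C C+\eta_A A$, the It\^o correction $\tfrac12\sigma^2\Phi^2S^2/I^2$ produced by $-\ln I$ is not bounded near $\{I=0\}$, because the diffusion coefficient $\sigma\Phi S$ of the $I$-equation does not vanish there when $C$ or $A$ is positive. However, your proposed remedy does not close the argument. If you replace $-\ln I$ (and the logarithms of the other infected classes) by $-\ln\Phi$, the resulting Lyapunov function no longer tends to $+\infty$ on the part of the boundary where $I\to 0$ while $C$ or $A$ stays of order one --- which is exactly the dangerous part, the one where the noise is nondegenerate. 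Consequently, on the event $\{\tau_k\le T\}$ realised by the $I$-component reaching $1/k$, $V\bigl(X(\tau_k)\bigr)$ remains bounded, the lower bound $m_k\to\infty$ fails, and the Dynkin step yields no contradiction. What your argument actually delivers is non-explosion together with positivity of $S$ and of the aggregate $\Phi$, not that the solution stays in $\R_+^4$: positivity of $C$ and $A$ could be recovered from positivity of $I$ through their explicit ODE representations, but positivity of $I$ itself is precisely what is left unproved, and no Lyapunov function blowing up on $\{I=0\}$ can have a bounded generator there, for the very reason you pointed out.

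For comparison, the paper's own proof meets the same boundary head on: it eliminates $A$, works with $V=\ln\bigl(S(t)I(t)C(t)\bigr)$, and disposes of the troublesome term only by assuming $S(t)/I(t)-1\le 0$ (``infection is high''), an assumption that is not among the hypotheses of the theorem; so the obstruction you ran into is genuine and is not honestly resolved there either. Two smaller points. First, your claim that $N$ solves $dN=(\Lambda-\mu N)\,dt$ silently sets $d=0$: summing the equations of \eqref{eq:model:stoch} gives $dN=\bigl(\Lambda-\mu N-d\,A\bigr)dt$, which is what the paper integrates, and \eqref{eq:NgoesLam:div:mu} then requires $d=0$ or $A(t)\to0$ (the paper glosses over this with L'Hospital's rule). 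Second, the a priori bound $0<X_i(t)\le M$ ``for all $t<\tau_e$'' presupposes the positivity you are trying to prove; it is legitimate only on $[0,\tau_k]$, inside the localisation, which is how you in fact use it, so this is a matter of wording rather than substance.
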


\begin{proof} 
Having in mind that  $N(t) = S(t) + I(t) + C(t) + A(t)$, we known that 
$A(t)=N(t)-S(t)-I(t)-C(t)\geq 0$, $t\geq 0$. 
It also follows that we can eliminate $A(t)$ from our SICA model 
\eqref{eq:model:stoch}, reducing it to a system of three equations: 
\begin{equation}
\label{eq:model:stoch:3}
\begin{cases}
d S (t) = \left[ \Lambda - \beta \left( I(t) + \eta_C \, C(t)
+ \eta_A  (N(t)-S(t)-I(t)-C(t)) \right) S(t) - \mu S(t) \right] dt \\
\phantom{d S (t) =} - \sigma \left( I(t) + \eta_C \, C(t)
+ \eta_A  (N(t)-S(t)-I(t)-C(t)) \right) S(t) dB(t),\\[0.1cm]
d I (t) = \left[ \beta \left( I(t) + \eta_C \, C(t)
+ \eta_A  (N(t)-S(t)-I(t)-C(t)) \right) S(t) - \xi_3 I(t) \right.\\
\left. \phantom{d S (t) =} + \alpha (N(t)-S(t)-I(t)-C(t)) + \omega C(t) \right] dt \\
\phantom{d S (t) =} + \sigma \left( I(t) + \eta_C \, C(t)  
+ \eta_A  (N(t)-S(t)-I(t)-C(t)) \right) S(t) dB(t), \\[0.1cm]
d C (t) = \left[ \phi I(t) - \xi_2 C(t) \right] dt.
\end{cases}
\end{equation}
If we prove that there exists a unique positive solution $(S(t), I(t), C(t))$ 
of system \eqref{eq:model:stoch:3} for $t\geq 0$, then we can replace process 
$I(t)$ in the last equation of system \eqref{eq:model:stoch} and solve it explicitly. 
From this fact, the existence of a unique positive solution for 
system \eqref{eq:model:stoch} is obtained. For a given 
$\left( S(0), I(0), C(0), A(0) \right) \in \mathbb{R}_+^4$, we prove 
that there exists a unique positive solution of system \eqref{eq:model:stoch:3} 
for every $t\geq 0$. Because the coefficients of system \eqref{eq:model:stoch:3} 
are locally Lipschitz continuous, there is a unique local solution 
on $[0,\tau_0)$ for any initial value $(S(0), I(0), C(0))$, where $\tau_0$ is known
in the literature as the \emph{explosion time}. It is necessary to prove that the solution 
is global, i.e., that $\tau_0=+\infty$ almost surely ($a.s.$, for brevity). Let us define
$$
\tau^+ = \inf\left\{t\in [0,\tau_0): S(t)\leq 0 \mbox{ or }  
I(t)\leq 0 \mbox{ or } R(t)\leq 0\right\}.
$$
Because the infimum of an empty set is $\infty$ and $\tau^+\leq\tau_0$, 
if we prove that $\tau^+=\infty\ a.s.$, then the proof of our theorem is complete. 
Indeed, if $\tau^+=\infty\ a.s.$, then $\tau_0=\infty$, which means that 
$(S(t), I(t), C(t))\in \mathbb{R}_+^3$ for $t\geq 0$ $a.s.$
Let us assume that  $\tau^+<\infty$. Then there exists $T>0$ such that 
$P(\tau^+<T)>0$. Define the function $V(t)=\ln\left(S(t)I(t)C(t)\right)$, 
which is twice differentiable and defined on positive values. 
By Ito's formula, 
\begin{equation*}
d(V(t))\geq K(S(t),I(t),C(t))+\sigma\left(\frac{S(t)}{I(t)}-1\right) 
\left( I(t) + \eta_C \, C(t) + \eta_A  (N(t)-S(t)-I(t)-C(t)) \right)  dB(t),
\end{equation*}
where 
\begin{equation*}
\begin{split}
&K(S(t),I(t),C(t)) = - \beta \left( I(t) + \eta_C \, C(t)
+ \eta_A  (N(t)-S(t)-I(t)-C(t)) \right)- \mu  \\
&\phantom{d(V(t))=} -\frac{1}{2}\sigma^2 \left( I(t) + \eta_C \, C(t)
+ \eta_A  (N(t)-S(t)-I(t)-C(t)) \right) S(t) - \xi_3\\
&\phantom{d(V(t))=}-\frac{ S^2(t)}{I^2(t)} \sigma ^2\left[ I(t) 
+ \eta_C \, C(t)  + \eta_A  (N(t)-S(t)-I(t)-C(t)) \right)^2].
\end{split}
\end{equation*}
By definition of our model, we are considering the case when infection is high.
Thus, $\displaystyle \frac{S(t)}{I(t)}-1\leq 0$ and 
\begin{equation}
\label{eq:rev:3}
\begin{split}
&V(t)\geq V(0)+\int_0^tK(S(s),I(s), C(s))ds\\ 
&\phantom{V(t)\leq}+\int_0^t\sigma\left(\frac{S(s)}{I(s)}-1\right) 
\left( I(s) + \eta_C \, C(s) + \eta_A  (N(s)-S(s)-I(s)-C(s)) \right)dB(s).
\end{split}
\end{equation}
It follows that $\lim_{t\rightarrow \tau^+} V(t)=-\infty$.
Letting $t\rightarrow \tau^+$ in \eqref{eq:rev:3}, we have
\begin{multline*}
-\infty \geq V(t) \geq V(0)+\int_0^tK(S(s),I(s), C(s))ds\\ 
+\int_0^t\sigma\left(\frac{S(s)}{I(s)}-1\right) 
\left( I(s) + \eta_C \, C(s) 
+ \eta_A  \left(N(s)-S(s)-I(s)-C(s)\right) \right)dB(s)
> -\infty,
\end{multline*}
which is in contradiction with the assumptions. We conclude that 
$\tau^+=\infty$ $a.s.$ 

It remains to prove \eqref{eq:NgoesLam:div:mu}.
If we sum all equations from system \eqref{eq:model:stoch}, then 
\begin{eqnarray*}
&&d(S(t)+I(t)+C(t)+A(t))=\left[\Lambda-\mu S(t) 
+\left( \phi- \xi_3+\rho\right) I(t) + \left(\alpha  - \xi_1 \right)A(t) 
+ \left(\omega- \xi_2\right) C(t)    \right]dt \\
&&\phantom{d(S(t)+I(t)+C(t)+A(t))}
=\left[\Lambda-\mu S(t) +( \phi-  \rho - \phi - \mu+\rho) I(t)\right.\\
&&\phantom{d(S(t)+I(t)+C(t)+A(t))=}\left.
+ \left(\alpha  - \alpha - \mu - d \right)A(t)
+ \left(\omega- \omega - \mu\right) C(t)\right]dt\\
&&\Leftrightarrow \frac{d(S(t)+I(t)+C(t)+A(t))}{dt}
=\Lambda-\mu(S(t)+I(t)+C(t)+A(t))-d\cdot A(t).
\end{eqnarray*}
Solving the last equation, we obtain that
$$
S(t)+I(t)+ C(t)+ A(t)=e^{-\mu t}\left[S(0)+I(0)+ C(0)+ A(0)
+\int_0^t(\Lambda -d\cdot A(t))e^{\mu s}ds\right].
$$
Applying L'Hospital's rule, it follows that
$\lim_{t\rightarrow +\infty}(S(t)+I(t)+ C(t)+ A(t))
=\displaystyle \frac{\Lambda}{\mu}$.
The proof is complete.
\end{proof} 


\section{Extinction}
\label{sec:3}

In this section, we prove a condition 
for the extinction of the disease.

\begin{theorem}
\label{extinction}
Let $Y(t)=\left(S(t), I(t), C(t), A(t) \right)$ be the solution 
of system \eqref{eq:model:stoch} with positive initial value. 
Assume that $\sigma^2>\frac{\beta}{2 \xi_3}$. Then, 
$$
I(t),C(t),A(t)\rightarrow 0 \text{ a.s. and }
S(t)\to \displaystyle \frac{\Lambda}{\mu} \text{ a.s.},
$$
as $t \to +\infty$.
\end{theorem}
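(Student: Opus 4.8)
The plan is to prove that the aggregate of infected individuals $W(t):=I(t)+C(t)+A(t)$ tends to $0$ almost surely; the theorem then follows at once, since $0\le I,C,A\le W$, and $S=N-W\to\Lambda/\mu$ because $N(t)\to\Lambda/\mu$ by Theorem~\ref{existence}. To obtain $W(t)\to0$ I would estimate the sample Lyapunov exponent $\limsup_{t\to\infty}t^{-1}\ln W(t)$ and show it is a.s.\ strictly negative under the hypothesis on $\sigma^{2}$.

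Summing the $I$-, $C$- and $A$-equations of \eqref{eq:model:stoch} — the linear couplings telescoping exactly as in the proof of Theorem~\ref{existence} via $\xi_{1}=\alpha+\mu+d$, $\xi_{2}=\omega+\mu$, $\xi_{3}=\rho+\phi+\mu$, while the infection and noise terms of the $I$-equation survive — one obtains
$$
dW(t)=\bigl[\beta\,P(t)S(t)-\mu W(t)-d\,A(t)\bigr]\,dt+\sigma\,P(t)S(t)\,dB(t),\qquad P:=I+\eta_{C}\,C+\eta_{A}\,A .
$$
Two elementary a priori bounds will be used. Because the noise in the $S$- and $I$-equations cancels, $N=S+I+C+A$ obeys the \emph{deterministic} relation $dN/dt=\Lambda-\mu N-dA\le\Lambda-\mu N$, so $N(t)\le N_{\max}:=\max\{N(0),\Lambda/\mu\}$ for every $t$ and hence $S,I,C,A,W\le N_{\max}$ a.s.; and from $\eta_{C}\le1\le\eta_{A}$ one has $P\le\eta_{A}W$.

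Since the solution stays in $\R_{+}^{4}$ (Theorem~\ref{existence}) we have $W>0$, so Itô's formula applies to $\ln W$:
$$
d\ln W=\Bigl[\beta\,\tfrac{PS}{W}-\mu-\tfrac{dA}{W}-\tfrac{\sigma^{2}}{2}\bigl(\tfrac{PS}{W}\bigr)^{2}\Bigr]dt+\sigma\,\tfrac{PS}{W}\,dB .
$$
Dropping $-dA/W\le0$ and completing the square in $x:=PS/W\ge0$, i.e.\ $\beta x-\tfrac{\sigma^{2}}{2}x^{2}\le\tfrac{\beta^{2}}{2\sigma^{2}}$, gives
$$
\ln W(t)\le\ln W(0)+\Bigl(\tfrac{\beta^{2}}{2\sigma^{2}}-\mu\Bigr)t+M(t),\qquad M(t):=\int_{0}^{t}\sigma\,\tfrac{P(s)S(s)}{W(s)}\,dB(s).
$$
Since $PS/W\le\eta_{A}S\le\eta_{A}N_{\max}$, the continuous local martingale $M$ satisfies $\langle M\rangle_{t}\le\sigma^{2}\eta_{A}^{2}N_{\max}^{2}\,t$, and the strong law of large numbers for martingales yields $M(t)/t\to0$ a.s. Dividing the previous display by $t$ and letting $t\to\infty$, $\limsup_{t\to\infty}t^{-1}\ln W(t)\le\beta^{2}/(2\sigma^{2})-\mu<0$ under the smallness assumption on $\sigma^{2}$ — this completing-the-square step being exactly where the explicit threshold on $\sigma^{2}$ enters. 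Hence $W(t)\to0$ a.s., i.e.\ $I(t),C(t),A(t)\to0$ a.s., and therefore $S(t)\to\Lambda/\mu$ a.s.\ by Theorem~\ref{existence}.

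The step I expect to be the main obstacle is not the quadratic estimate but keeping it legitimate: one must make sure the diffusion coefficient $PS/W$ stays bounded, so that $\langle M\rangle_{t}=O(t)$ and the martingale SLLN can be used. This is precisely why one should work with the aggregate $W=I+C+A$ and not with $\ln I$ alone: applying Itô to $\ln I$ produces the uncontrolled drift input $(\alpha A+\omega C)/I$ and the diffusion ratio $(\eta_{C}C+\eta_{A}A)S/I$, each of which may blow up exactly when $I$ is small — the regime relevant to extinction.
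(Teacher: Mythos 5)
Your structural work is sound: the SDE you write for $W=I+C+A$ is the correct sum of the last three equations of \eqref{eq:model:stoch}, the bound $P\,S/W\leq \eta_A S\leq \eta_A N_{\max}$ does keep the diffusion ratio bounded, the martingale strong law applies, and the passage from $W\to 0$ to the conclusions for $I,C,A,S$ is fine. The genuine gap is in the very last inequality: your sample Lyapunov exponent is $\frac{\beta^2}{2\sigma^2}-\mu$, and the theorem's hypothesis $\sigma^2>\frac{\beta}{2\xi_3}$ does \emph{not} make this negative. By aggregating $I+C+A$ you cancel the transfer terms $\phi I$, $\rho I$, $\alpha A$, $\omega C$ inside $W$, so the only decay surviving in the drift of $\ln W$ is the natural death rate $\mu$ (plus the discarded $-dA/W\leq 0$), whereas the stated threshold is calibrated to the full exit rate $\xi_3=\rho+\phi+\mu$ from the $I$ class. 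What your argument actually needs is $\sigma^2>\frac{\beta^2}{2\mu}$, which is strictly stronger than the hypothesis whenever $\beta\xi_3>\mu$: e.g.\ with the paper's values $\phi=1$, $\rho=0.1$, $\mu=1/69.54$ (so $\xi_3\approx 1.11$) and $\beta=0.5$, $\sigma^2=0.3$, the hypothesis $\sigma^2>\beta/(2\xi_3)\approx 0.22$ holds, yet $\frac{\beta^2}{2\sigma^2}-\mu\approx 0.40>0$, so your bound yields no extinction. (Also, the hypothesis is a \emph{large}-noise condition, a lower bound on $\sigma^2$, not a ``smallness assumption''.)

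For comparison, the paper applies It\^o's formula to $\log I$ alone; this keeps $-\xi_3$ as the decay constant and, after completing the square, gives the exponent $\frac{\beta^2}{2\sigma^2}-\xi_3$, which is what the stated condition is aimed at (up to the $\beta$ versus $\beta^2$ reading of that condition). The price is exactly the terms you flagged: the inflow $\frac{\alpha A+\omega C}{I}$ in the drift and the unbounded ratio in the diffusion, which the paper controls by estimating $\frac{1}{t}\int_0^t\frac{\alpha A(s)+\omega C(s)}{I(s)}\,ds$ and the quadratic variation of its stochastic integral. So your modification is not a harmless repackaging of that argument: it trades away the $\phi+\rho$ part of the decay and therefore proves extinction only under a strictly stronger noise condition than the one in the statement. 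To repair it along your lines you would need a weighted aggregate $W_c=I+c_1C+c_2A$ with $c_1,c_2$ chosen so that the drift of $\ln W_c$ retains a decay constant comparable to $\xi_3$, or else work with $\log I$ as the paper does and supply a rigorous control of the inflow term.
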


\begin{proof} 
By Theorem~\ref{existence}, the solution of system \eqref{eq:model:stoch} 
is positive for every $t\geq 0$. Applying It\^{o}'s formula on the second 
equation of system \eqref{eq:model:stoch}, we have
\begin{equation}
\label{e1}
\begin{split}
&d(\log I(t))
=\left\{-\frac{\sigma^2}{2}\left[\frac{S(t)}{I(t)}\left( I(t) 
+ \eta_C \, C(t)  + \eta_A  A(t) \right)-\frac{\beta}{\sigma^2} \right]^2
+\frac{\beta^2}{2\sigma^2}
+\frac{1}{I(t)}[- \xi_3 I(t) + \alpha A(t) + \omega C(t)] \right\}dt\\
&\phantom{d(\log I(t))=}+\frac{\sigma S(t)}{I(t)}
\left( I(t) + \eta_C \, C(t)  + \eta_A  A(t) \right)dB(s).
\end{split}
\end{equation}
Integrating both sides of \eqref{e1} from $0$ to $t$, 
and then dividing by $t$, we obtain that
\begin{equation}
\label{e2}
\frac{\log I(t)}{t}\leq  \frac{\log I(0)}{t}+\frac{\beta^2}{2\sigma^2}
- \xi_3+\frac{J(t)}{t}+\frac{M(t)}{t},
\end{equation}
where we define
$$
J(t)= \int_0^t\frac{\alpha A(s) + \omega C(s)}{I(s)}ds,
\quad M(t)=\int_0^t\frac{\sigma S(s)}{I(t)}\left( 
I(s) + \eta_C \, C(s)  + \eta_A  A(s) \right)dB(s).
$$
We need to estimate functions $J(t)$ and $M(t)$. 
As $N(t) = S(t) + I(t) + C(t) + A(t)$, then each coordinate 
of the population  ($S$, $I$, $C$, $A$) is less or equal than 
the number of the whole population $N(t)$. We have
$$
J(t) = \int_0^t\frac{\alpha A(s) 
+ \omega C(s)}{I(s)}ds\leq (\alpha+\omega)\frac{\Lambda}{\mu}.
$$
Because $M(t)$ is an integral with respect to the Brownian motion, 
it is local continuous martingale. Also, if we replace the upper bound 
with $t=0$ in $M(t)$, then we have $M(0)=0$. Further, we can find 
the quadratic variation and obtain the following limits:
$$
\limsup_{t\rightarrow +\infty}\frac{\langle M,M\rangle_t}{t}
\leq \frac{\sigma^2\Lambda^4(1+ \eta_C+\eta_A)^2 }{\mu^4}<+\infty.
$$
Applying the large number theorem for martingales \cite{Gray:SIAM:JAM:2011}, we have that
\begin{equation}
\label{eq:aux}
\lim_{t\rightarrow +\infty}\frac{M(t)}{t}=0 \ \ a.s.
\end{equation}
If we use \eqref{eq:aux} into estimates in \eqref{e2}, 
and the fact that $\sigma^2>\frac{\beta}{2 \xi_3}$, then 
$$
\limsup_{t\rightarrow +\infty}\frac{\log I(t)}{t}\leq \frac{\beta^2}{2\sigma^2}- \xi_3<0, \ a.s.
$$
This implies that $\lim_{t\rightarrow +\infty}I(t)=0$ a.s.
Solving explicitly the ordinary differential equation for process $C(t)$, 
from system \eqref{eq:model:stoch} we have
$$
C(t)=e^{-\xi_2t}\left[C(0)+\int_0^t\phi I(s)e^{\xi_2s} ds\right]
\leq  e^{-\xi_2t}C(0)+\phi\int_0^t I(s) ds.
$$
As $\lim_{t\rightarrow +\infty}I(t)=0$ a.s.,
we also have that $\lim_{t\rightarrow +\infty}C(t)=0$ a.s.
Similarly, we obtain that $\lim_{t\rightarrow +\infty}A(t)=0$ a.s.
Estimation of $S(t)$ is easy: since
$N(t)= S(t)+I(t)+ C(t)+ A(t)$ and 
from \eqref{eq:NgoesLam:div:mu} we know that
$N(t) \to \frac{\Lambda}{\mu}\ \mbox{ a.s. }$ as $t \to +\infty$, 
replacing $I(t),C(t),A(t)\rightarrow0$ a.s., $t\rightarrow+\infty$,
we obtain that $S(t)\to \frac{\Lambda}{\mu}$ a.s., $t \to +\infty$.
\end{proof}


\section{Persistence in mean} 
\label{sec:4}

We begin by recalling the notion of persistence in mean.

\begin{definition}
\label{persistence}
System \eqref{eq:model:stoch} is said to be persistent in mean if 
$\displaystyle \lim_{t\rightarrow \infty}\frac{1}{t}\int_0^tI(s)ds>0$ a.s.
\end{definition}

Let us introduce the notation $[x(t)]=\displaystyle \frac{1}{t}\int_0^t x(s)ds>0$.

\begin{theorem}
\label{persistence theor} 
Let
\begin{equation}
\label{eq:K1}
K_1=\frac{\beta}{\mu}\left(\frac{\alpha\rho}{\xi_1}-\xi_3
+\frac{\omega\phi}{\xi_2}\right)+\frac{\mu(\xi_1\xi_2- \mu)}{\Lambda(1 
+ \eta_C + \eta_A )}.
\end{equation}
For any initial value $(S(0),I(0), C(0),A(0))\in \R_+^4$ such that 
$$
S(t)+I(t)+ C(t)+ A(t)=N(t) \to \frac{\Lambda}{\mu} \text{ as } t \to \infty, 
$$
if $K_1 \neq 0$, $\frac{1}{K_1}\left(\frac{\Lambda \beta}{\mu}-\xi_1\xi_2 
-\frac{\sigma^2\Lambda^2}{2\mu^2}\right) > 0$ and $\xi_1,\xi_2>1$, then 
the solution $(S(t),I(t), C(t),A(t))$ satisfies
$$
\liminf_{t\rightarrow \infty} [I(t)]
\geq\frac{1}{K_1}\left(\frac{\Lambda \beta}{\mu}
-\xi_1\xi_2 -\frac{\sigma^2\Lambda^2}{2\mu^2}\right).
$$
\end{theorem}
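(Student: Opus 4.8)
\emph{Proof proposal.} The plan is the standard logarithmic--Lyapunov argument for persistence in mean. I would establish, almost surely, a lower estimate of the shape
\[
\tfrac1t\log V(t)\ \ge\ \Bigl(\tfrac{\Lambda\beta}{\mu}-\xi_1\xi_2-\tfrac{\sigma^2\Lambda^2}{2\mu^2}\Bigr)-K_1\,[I(t)]+\varepsilon(t),
\]
where $V(t)$ is the logarithm of a convenient positive combination of $I,C,A$, $K_1$ is as in \eqref{eq:K1}, and $\varepsilon(t)\to 0$ a.s. Since $N(t)\to\Lambda/\mu$ by Theorem~\ref{existence}, that combination is eventually bounded, hence $\limsup_{t\to\infty}\tfrac1t\log V(t)\le 0$; taking $\limsup_{t\to\infty}$ in the inequality above --- and using $K_1>0$, which is precisely what turns this step into a lower bound on $\liminf_{t\to\infty}[I(t)]$ rather than an upper bound on $\limsup_{t\to\infty}[I(t)]$ --- gives $0\ge\bigl(\tfrac{\Lambda\beta}{\mu}-\xi_1\xi_2-\tfrac{\sigma^2\Lambda^2}{2\mu^2}\bigr)-K_1\liminf_{t\to\infty}[I(t)]$, which is the assertion and, being positive, yields persistence in mean in the sense of Definition~\ref{persistence}.

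For $V$ I would apply It\^o's formula to the logarithm of $I+\tfrac{\omega}{\xi_2}C+\tfrac{\alpha}{\xi_1}A$ (one may instead use $\log I$, as in the proof of Theorem~\ref{extinction}, carrying $\tfrac{\omega}{\xi_2}C+\tfrac{\alpha}{\xi_1}A$ as an additive linear correction; the mechanism is the same). With these weights the drift telescopes --- the $+\alpha A+\omega C$ produced by the $I$--equation of \eqref{eq:model:stoch} cancels --- leaving drift $\beta(I+\eta_C C+\eta_A A)S+\bigl(\tfrac{\alpha\rho}{\xi_1}-\xi_3+\tfrac{\omega\phi}{\xi_2}\bigr)I$ and diffusion $\sigma(I+\eta_C C+\eta_A A)S\,dB$, and one recognises the combination $\tfrac{\alpha\rho}{\xi_1}-\xi_3+\tfrac{\omega\phi}{\xi_2}$ of \eqref{eq:K1}. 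Integrating on $[0,t]$ and dividing by $t$, the stochastic integral is a local martingale whose quadratic variation grows at most linearly --- because $S,I,C,A$ are eventually bounded by (roughly) $\Lambda/\mu$ by Theorem~\ref{existence} --- so by the strong law of large numbers for local martingales it contributes a term tending to $0$ a.s., exactly as for \eqref{eq:aux}.

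It then remains to recast the time--averaged drift in the displayed form. Integrating the last two (ordinary) equations of \eqref{eq:model:stoch} and dividing by $t$ gives $[C(t)]=\tfrac{\phi}{\xi_2}[I(t)]-\tfrac{C(t)-C(0)}{\xi_2 t}$ and $[A(t)]=\tfrac{\rho}{\xi_1}[I(t)]-\tfrac{A(t)-A(0)}{\xi_1 t}$, hence $[C(t)]=\tfrac{\phi}{\xi_2}[I(t)]+o(1)$, $[A(t)]=\tfrac{\rho}{\xi_1}[I(t)]+o(1)$ (as $C,A$ are bounded), and $[S(t)]=[N(t)]-[I(t)]-[C(t)]-[A(t)]$ with $[N(t)]\to\Lambda/\mu$. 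I would then bound the time--averaged drift below using: $S=N-I-C-A$ together with $N(t)\to\Lambda/\mu$ (which brings out the leading $\tfrac{\Lambda\beta}{\mu}$); $S\le\Lambda/\mu$ for $t$ large, to bound the It\^o correction below by $-\tfrac{\sigma^2\Lambda^2}{2\mu^2}$ up to $o(1)$; and $I+\eta_C C+\eta_A A\le(1+\eta_C+\eta_A)N$, whose reciprocal is asymptotically at least $\tfrac{\mu}{(1+\eta_C+\eta_A)\Lambda}$ --- the origin of that factor in $K_1$ --- while $\xi_1,\xi_2>1$ is used to ensure $\xi_1\xi_2-\mu>0$. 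Substituting the relations for $[C(t)],[A(t)]$ and collecting the coefficient of $[I(t)]$ then produces exactly the displayed inequality with $K_1$ as in \eqref{eq:K1}, and the limit argument outlined at the start completes the proof.

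The genuinely delicate part is this last bookkeeping: one must squeeze from the nonlinear drift $\beta(I+\eta_C C+\eta_A A)S$ and the quadratic It\^o term an \emph{affine} lower bound $\mathrm{const}-K_1 I+o(1)$ with precisely the constants claimed, and it is this requirement that dictates the particular weights $\tfrac{\omega}{\xi_2},\tfrac{\alpha}{\xi_1}$ and the particular way the bounds $N\to\Lambda/\mu$, $S=N-I-C-A$ and $I+\eta_C C+\eta_A A\le(1+\eta_C+\eta_A)N$ are invoked. Verifying that the martingale's quadratic variation grows only linearly (so the SLLN applies) is a secondary technical step, and --- as throughout --- it rests on the global boundedness furnished by Theorem~\ref{existence}.
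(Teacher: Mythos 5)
Your overall architecture (It\^o's formula applied to the logarithm of a positive linear combination of $I,C,A$, the strong law of large numbers for the local martingale, integrated identities to express time averages, an affine lower bound in $[I(t)]$, and the conclusion via boundedness of the logarithm) is the same skeleton as the paper's. But the step you defer as ``bookkeeping'' --- and assert produces \emph{exactly} the constants $\frac{\Lambda\beta}{\mu}-\xi_1\xi_2-\frac{\sigma^2\Lambda^2}{2\mu^2}$ and $K_1$ --- is the whole proof, and with your choice of Lyapunov function it does not go through. If $V=I+\frac{\omega}{\xi_2}C+\frac{\alpha}{\xi_1}A$, It\^o's formula gives drift $\frac{\beta(I+\eta_C C+\eta_A A)S+c_0 I}{V}-\frac{\sigma^2(I+\eta_C C+\eta_A A)^2S^2}{2V^2}$ and diffusion $\frac{\sigma(I+\eta_C C+\eta_A A)S}{V}$, where $c_0=\frac{\alpha\rho}{\xi_1}-\xi_3+\frac{\omega\phi}{\xi_2}<0$. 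The ratio $\frac{I+\eta_C C+\eta_A A}{V}$ is only pinched between $\min\bigl(1,\eta_C\xi_2/\omega,\eta_A\xi_1/\alpha\bigr)$ and $\max\bigl(1,\eta_C\xi_2/\omega,\eta_A\xi_1/\alpha\bigr)$, and these are not $1$ (with the paper's parameter values the lower one is about $0.017$); so the nonlinear term is \emph{not} $\beta S$, the leading constant degrades by that factor, and the It\^o correction is bounded below only by $-\frac{\sigma^2\Lambda^2}{2\mu^2}\max(\cdot)^2$, not by $-\frac{\sigma^2\Lambda^2}{2\mu^2}$. Moreover the telescoped term contributes $\frac{c_0 I}{V}\ge c_0$ (a constant, since $I\le V$), whereas in $K_1$ the quantity $c_0$ appears multiplied by $\frac{\beta}{\mu}$ and as a coefficient of $[I(t)]$; your scheme cannot produce that factor. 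Finally, your substitution $[S]=[N]-[I]-[C]-[A]$ with $[C]=\frac{\phi}{\xi_2}[I]+o(1)$, $[A]=\frac{\rho}{\xi_1}[I]+o(1)$ yields the coefficient $-(1+\frac{\phi}{\xi_2}+\frac{\rho}{\xi_1})$ for $[I]$, which equals $\frac{c_0}{\mu}$ only when $d=0$; so even that piece does not reproduce $K_1$ in general.

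The paper splits the two ideas the other way around. The weights $\frac{\omega}{\xi_2},\frac{\alpha}{\xi_1}$ are used only at the level of the \emph{integrated linear identity}: integrating $d\bigl(S+I+\frac{\alpha}{\xi_1}A+\frac{\omega}{\xi_2}C\bigr)$ (where the $\beta$- and noise terms cancel exactly) gives $[S(t)]=\frac{\Lambda}{\mu}+\frac{c_0}{\mu}[I(t)]-\frac{K(t)}{\mu}$ with $K(t)\to0$; this is where $\frac{\beta}{\mu}c_0$, the first summand of $K_1$, comes from after multiplying by $\beta$. The logarithm, by contrast, is taken of $I+\eta_C C+\eta_A A$, whose weights match the force of infection, so the nonlinear drift is exactly $\beta S$, the diffusion is exactly $\sigma S$, and the It\^o correction is $-\frac{\sigma^2 S^2}{2}\ge-\frac{\sigma^2\Lambda^2}{2\mu^2}$; the leftover linear terms are then bounded using $\xi_1,\xi_2>1$ and $I+\eta_C C+\eta_A A\le(1+\eta_C+\eta_A)\frac{\Lambda}{\mu}$, which is the origin of $-\xi_1\xi_2$ and of the second summand $\frac{\mu(\xi_1\xi_2-\mu)}{\Lambda(1+\eta_C+\eta_A)}$ of $K_1$. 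Substituting the identity for $[S(t)]$ into the time-averaged It\^o inequality and invoking the martingale law of large numbers then gives the stated bound. To repair your proposal you would have to switch the log-function to $I+\eta_C C+\eta_A A$ and add the weighted integral identity for $[S(t)]$ as a separate ingredient; as written, the claimed constants cannot be extracted.
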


\begin{proof}
An integration of system \eqref{eq:model:stoch} yields
\begin{equation*}
\frac{S(t)-S(0)}{t}+\frac{I(t)-I(0)}{t}
+\frac{\alpha}{\xi_1}\frac{A(t)-A(0)}{t}
+\frac{\omega}{\xi_2}\frac{C(t)-C(0)}{t}
=\Lambda-\mu[S(t)]
+\left(\frac{\alpha\rho}{\xi_1}-\xi_3
+\frac{\omega\phi}{\xi_2}\right)[I(t)].
\end{equation*}
From here, one has
\begin{equation}
\label{p1}
[S(t)]=\frac{\Lambda}{\mu}+\frac{1}{\mu}\left(\frac{\alpha\rho}{\xi_1}-\xi_3
+\frac{\omega\phi}{\xi_2}\right)[I(t)]-\frac{K(t)}{\mu},
\end{equation}
where $K(t)=\displaystyle \frac{S(t)-S(0)}{t}+\frac{I(t)-I(0)}{t}+\frac{\alpha}{\xi_1}\frac{A(t)
-A(0)}{t}+\frac{\omega}{\xi_2}\frac{C(t)-C(0)}{t}$. By It\^{o}'s formula, we obtain
\begin{multline*}
d\log \left(I(t)+ \eta_C \, C(t)  + \eta_A  A(t)\right)\\
\geq \left\{\beta S(t) -\frac{ \mu I(t) + \xi_1\eta_A A(t)
+\xi_2\eta_C  C(t)}{I(t) + \eta_C \, C(t)
+ \eta_A  A(t)}-\frac{\sigma^2S^2(t)}{2(I(t) + \eta_C \, C(t)
+ \eta_A  A(t))^2}\right\}dt+ \sigma  S(t) dB(t).
\end{multline*}
We assume $\xi_1,\xi_2>1$. Then,	
\begin{eqnarray*}
&&\hskip-1cm d\log \left(I(t)+ \eta_C \, C(t)  + \eta_A  A(t)\right)\\
&&\hskip-1cm \phantom{d\log (I(t)+ \eta_C \, C(t)}\geq \left\{\beta S(t) 
-\frac{ \mu I(t) + \xi_1\xi_2\eta_A A(t)+\xi_1\xi_2\eta_C  C(t)}{I(t) + \eta_C \, C(t)
+ \eta_A  A(t)}-\frac{\sigma^2\Lambda^2}{2\mu^2}\right\}dt+ \sigma  S(t) dB(t)\\
&&\hskip-1cm \phantom{d\log (I(t)+ \eta_C \, C(t)}\geq \left\{\beta S(t)
-\xi_1\xi_2 +\frac{\mu(\xi_1\xi_2- \mu) I(t)}{\Lambda(1 + \eta_C \,
+ \eta_A )}-\frac{\sigma^2\Lambda^2}{2\mu^2}\right\}dt+ \sigma  S(t) dB(t).
\end{eqnarray*}
Integrating the last inequality from 0 to $t$, and dividing it with $t$, we have
\begin{eqnarray*}
&&\frac{\log (I(t)+ \eta_C \, C(t)  + \eta_A  A(t))
-\log (I(0)+ \eta_C \, C(0)  + \eta_A  A(0))}{t} \\
&&\phantom{aaaaa }\geq \beta [S(t)] +\frac{\mu(\xi_1\xi_2- \mu)}{\Lambda(1 + \eta_C \,
+ \eta_A )}[I(t)]-\xi_1\xi_2 -\frac{\sigma^2\Lambda^2}{2\mu^2}
+ \frac{\sigma}{t}\int_0^t  S(t) dB(t)\\
&&\phantom{aaaaa}\geq\frac{\Lambda \beta}{\mu}+\left[
\frac{\beta}{\mu}\left(\frac{\alpha\rho}{\xi_1}-\xi_3
+\frac{\omega\phi}{\xi_2}\right)+\frac{\mu(\xi_1\xi_2- \mu)}{\Lambda(1 + \eta_C \,
+ \eta_A )}\right][I(t)]-\xi_1\xi_2 -\frac{\sigma^2\Lambda^2}{2\mu^2}
-\frac{\beta K(t)}{\mu} +\frac{M(t)}{t}.
\end{eqnarray*}
It follows that
\begin{equation}
\label{p2}
\begin{split}
&[I(t)]\geq \frac{1}{K_1}\left[ \frac{\Lambda \beta}{\mu}
-\xi_1\xi_2-\frac{\sigma^2\Lambda^2}{2\mu^2}
-\frac{\beta K(t)}{\mu} +\frac{M(t)}{t} \right.\\
&\phantom{[I(t)]\geq \frac{1}{K_1}aa}\left. 
-\frac{\log (I(t)+ \eta_C \, C(t)  + \eta_A  A(t))
-\log (I(0)+ \eta_C \, C(0)  + \eta_A  A(0))}{t}\right],
\end{split}
\end{equation}
where $K_1$ is given by \eqref{eq:K1} and
$M(t)=\sigma \displaystyle \int_0^t  S(t) dB(t)$.
Process $M(t)$ is a local continuous martingale with value 0 for $t=0$, 
and it has the property that $\limsup_{t\rightarrow \infty}
\frac{\langle M,M\rangle_t}{t}\leq \frac{\sigma^2\Lambda^2}{\mu^2}<+\infty$,
a.s. Applying the large number theorem for martingales \cite{Gray:SIAM:JAM:2011}, 
it follows that $\lim_{t\rightarrow \infty}\frac{M(t)}{t}=0$, a.s.
As $S(t)+I(t)+ C(t)+ A(t)\leq \frac{\Lambda}{\mu}$,
$$
-\infty< \log \left(I(t)+ \eta_C \, C(t) + \eta_A  A(t)\right)
< \log\left(\frac{\Lambda}{\mu}(1+\eta_C+\eta_A)\right)
$$
and $\lim_{t\rightarrow \infty}K(t)=0$. By taking 
the limit inferior of both sides in Eq. \eqref{p2}, 
one has
$$
\liminf_{t\rightarrow \infty} [I(t)]
\geq\frac{1}{K_1}\left(\frac{\Lambda \beta}{\mu}
-\xi_1\xi_2 -\frac{\sigma^2\Lambda^2}{2\mu^2}\right),
$$
which completes the proof.
\end{proof}


\section{Numerical simulations}
\label{sec:5}

In this section we consider the following initial conditions: 
\begin{equation}
\label{eq:initi:cond}
S(0) = 5\, , \quad I(0) = 1\, , \quad C(0) = 0.1\, , \quad I(0) =0.1.
\end{equation}
We start by illustrating the extinction result proved in Theorem~\ref{extinction}.
The transmission coefficient is assumed to take the value $\beta_0 = 0.005$ and
the intensity of the Brownian motion to take the value $\sigma^2 = 0.01$. 
The inequality $\sigma^2>\frac{\beta}{2 \xi_3}$ from Theorem~\ref{extinction} 
is satisfied: $\sigma^2 - \frac{\beta}{2 \xi_3} = 0.009 > 0$. 
The rest of the parameters take the following values 
based on \cite{SilvaTorres:DCDS:2017} and references cited therein: 
$\mu = 1/69.54$, $\Lambda = 2.1 \mu$, $\eta_C = 0.015$, $\eta_A = 1.3$, 
$\phi = 1$, $\rho = 0.1$, $\alpha = 0.33$, and $\omega = 0.09$.  
The extinction of the stochastic model is observed numerically 
in Figure~\ref{fig:extinction}.
\begin{figure}[!htb]
\centering
\subfloat[\footnotesize{$S$: Susceptible}]{\label{S:ext}
\includegraphics[width=0.25\textwidth]{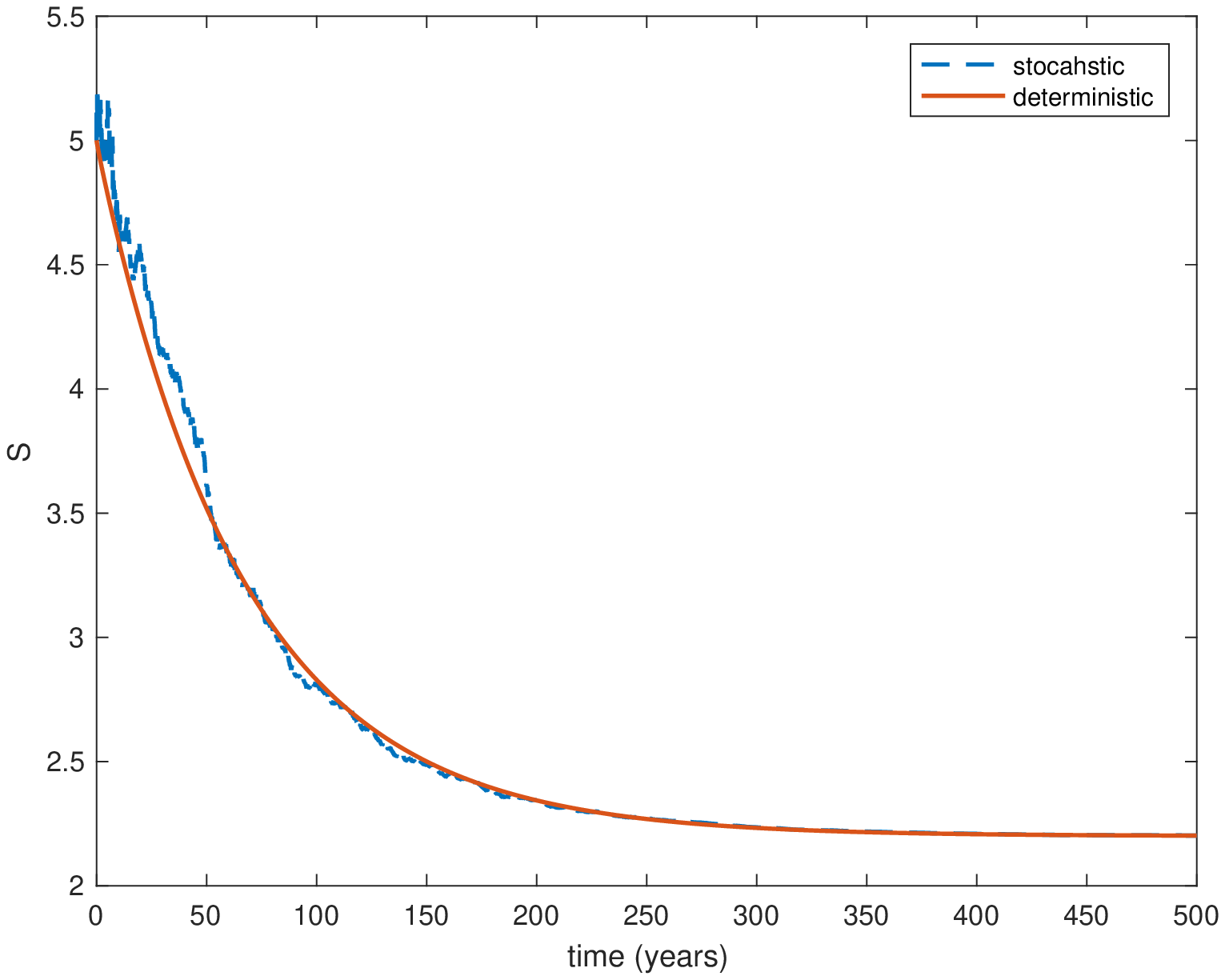}}
\subfloat[\footnotesize{$I$: Infected}]{\label{I:ext}
\includegraphics[width=0.25\textwidth]{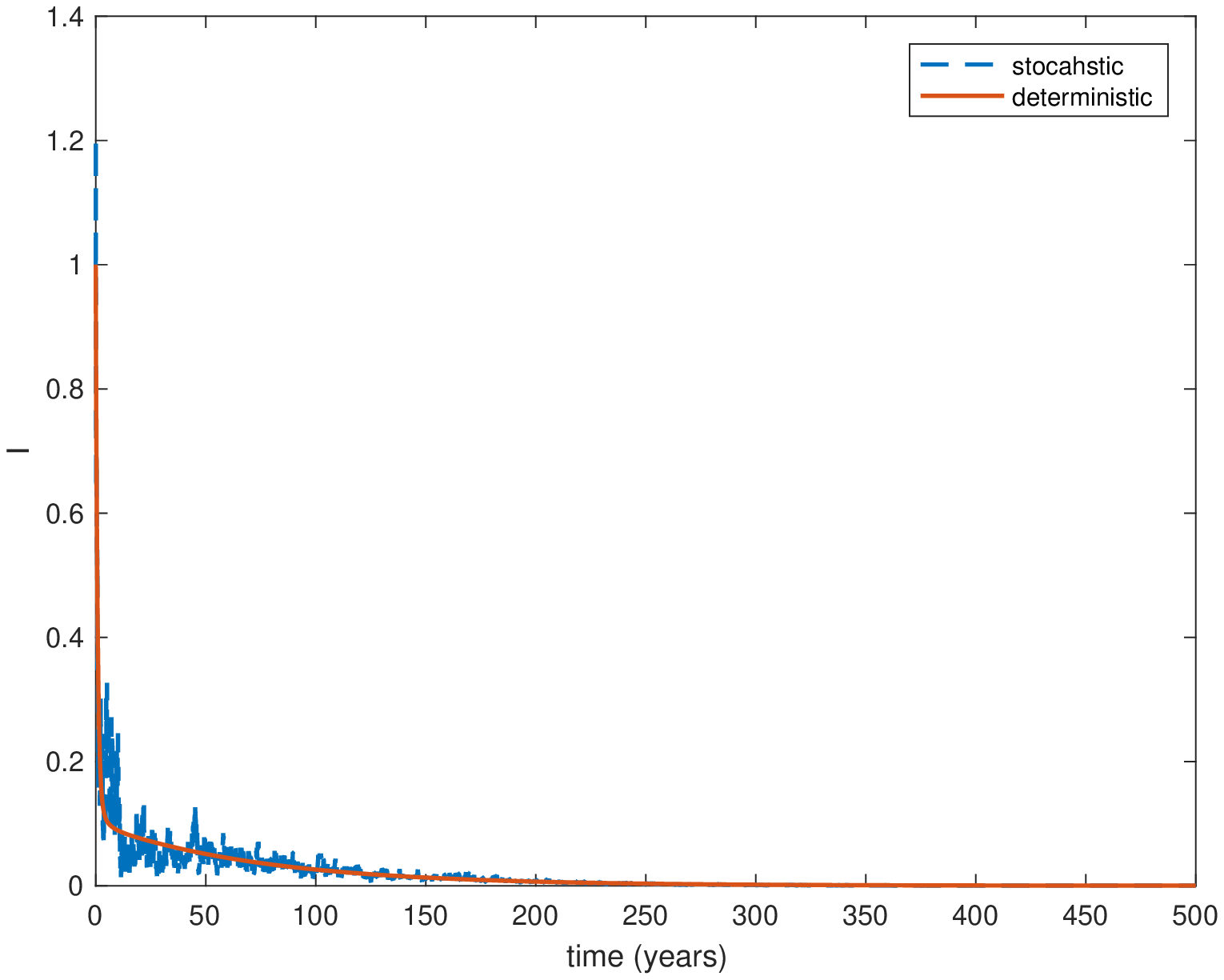}}
\subfloat[\footnotesize{$C$: Chronic}]{\label{C:ext}
\includegraphics[width=0.25\textwidth]{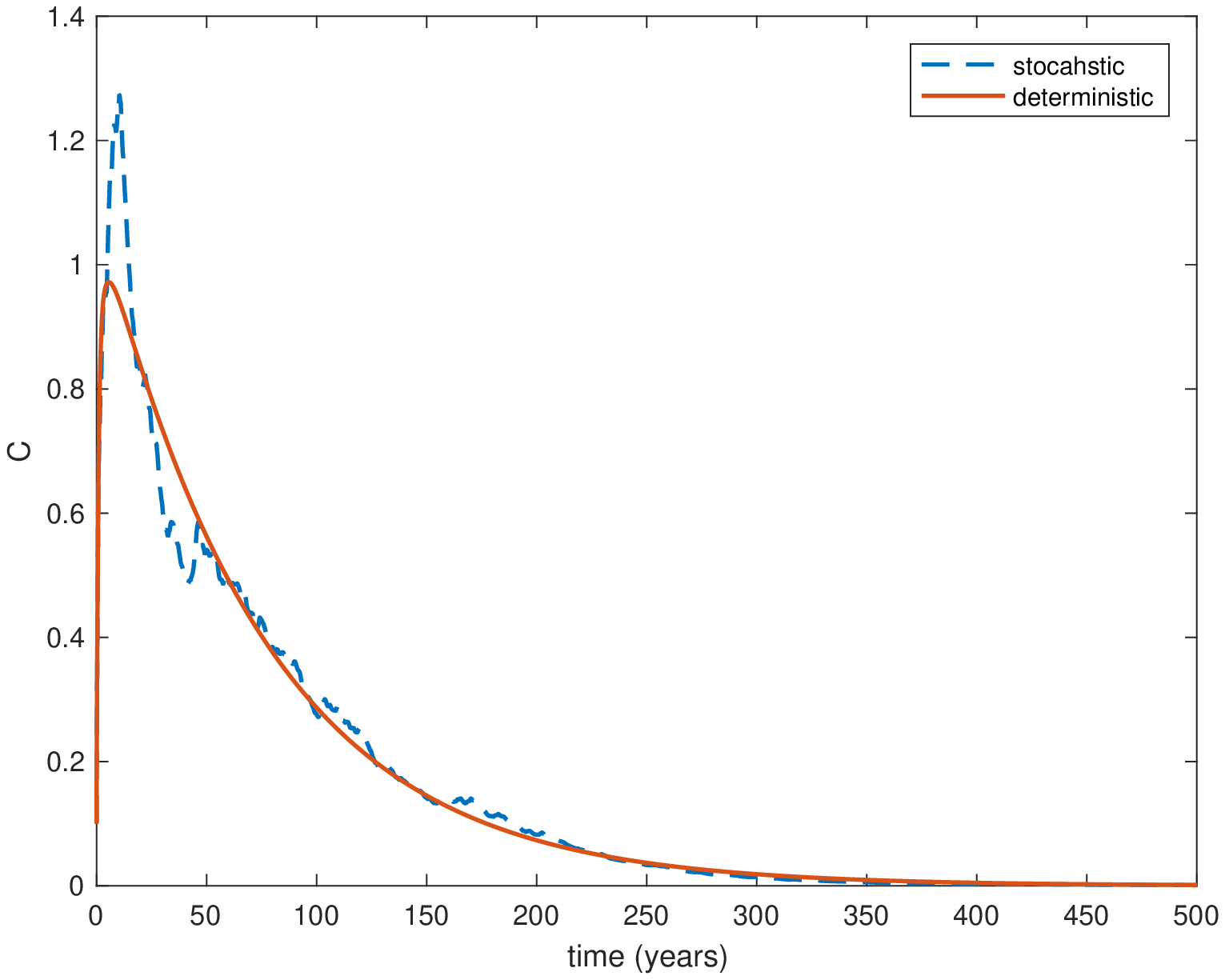}}
\subfloat[\footnotesize{$A$: AIDS}]{\label{A:ext}
\includegraphics[width=0.25\textwidth]{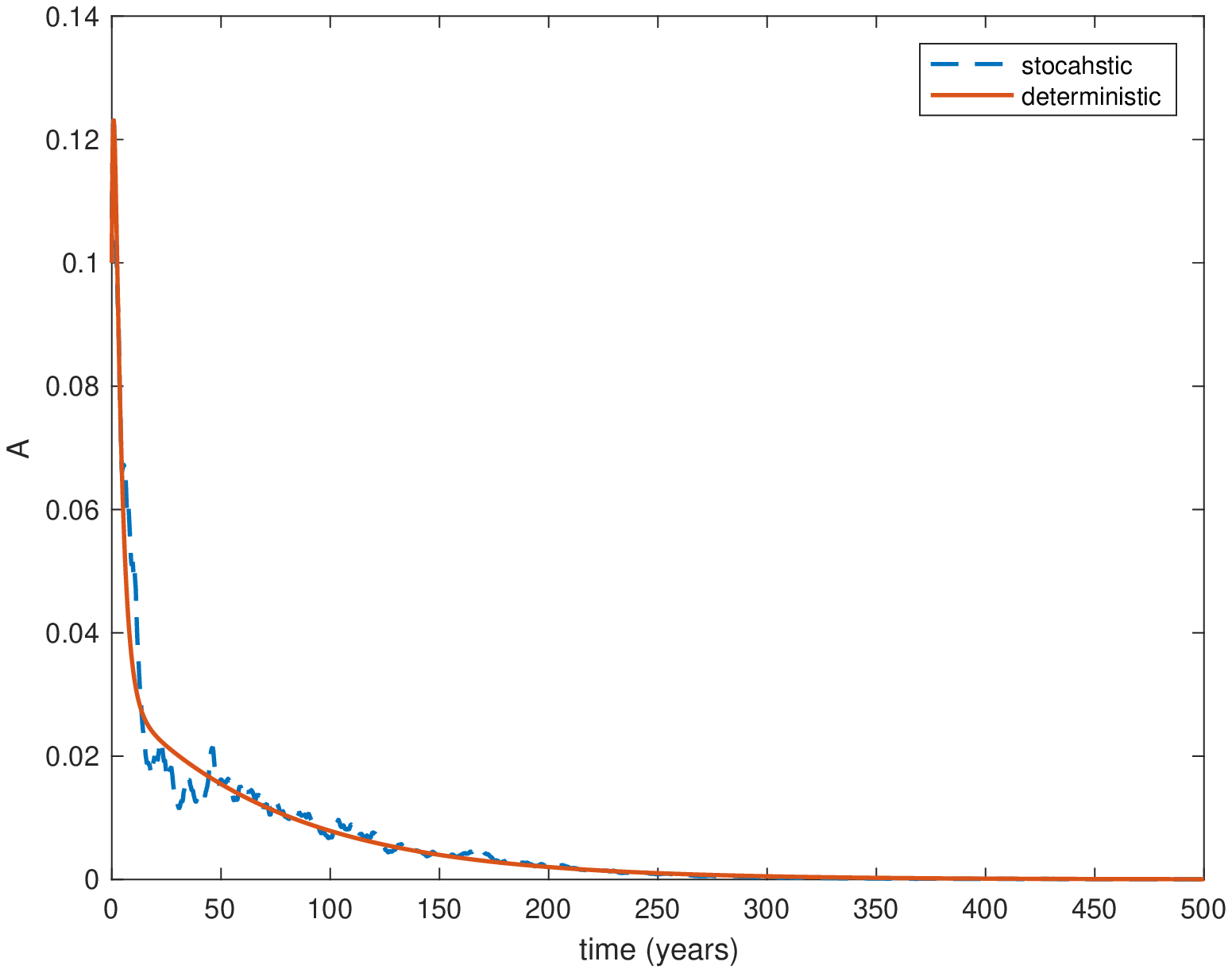}}	
\caption{Extinction: stochastic (dashed line)
and deterministic (continuous line) cases.}
\label{fig:extinction}
\end{figure}
To illustrate the persistence result proved in Theorem~\ref{persistence theor}, 
we consider the parameter values $\mu = 1/69.54$, $\Lambda = 2.1 \mu$, 
$\eta_C = 0.015$, $\eta_A = 1.3$, $\phi = 1$, $\rho = 0.1$, $\omega = 0.99$ and $\alpha = 0.99$. 
We take $\beta = 0.4$ and $\sigma^2 = 0.01$. For these parameter values, 
one has $\frac{1}{K_1}\left(\frac{\Lambda \beta}{\mu}-\xi_1\xi_2 
-\frac{\sigma^2\Lambda^2}{2\mu^2}\right) = 0.3012 > 0$. 
In Figure~\ref{fig:persistence}, we observe the persistence of the disease.
\begin{figure}[!htb]
\centering
\subfloat[\footnotesize{$I(t)$, for $t \in [0, 1000]$.}]{\label{I:persistence}
\includegraphics[width=0.45\textwidth]{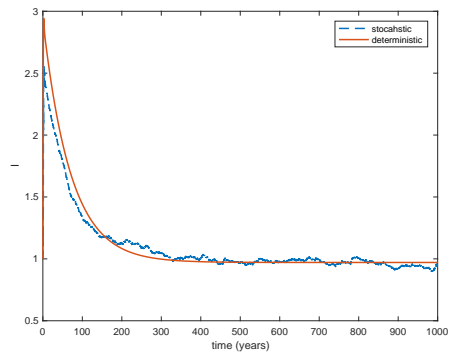}}
\subfloat[\footnotesize{$I(t)$, for $t \in [200, 2000]$.}]{\label{I:persistence:zoom}
\includegraphics[width=0.45\textwidth]{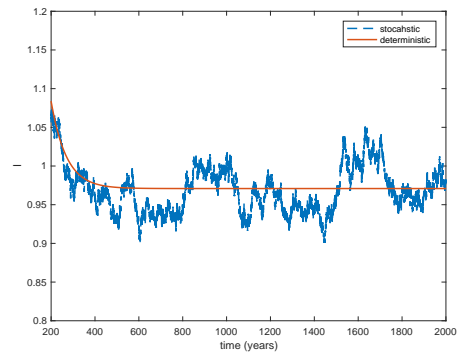}}
\caption{Infected individuals $I$: stochastic (dashed line) 
and deterministic (continuous line) cases.}
\label{fig:persistence}
\end{figure}


\section*{Acknowledgments}

Jasmina Djordjevi\'c is supported 
by grant no.~174007 of MNTRS, 
while Silva and Torres are supported
by the Portuguese Foundation for Science and Technology (FCT)
within the R\&D unit CIDMA (UID/MAT/04106/2013) 
and TOCCATA research project PTDC/EEI-AUT/2933/2014.
Silva is also grateful to the FCT post-doc
fellowship SFRH/BPD/72061/2010.
The authors would like to thank a referee, 
for his/her valuable suggestions and comments.



\end{document}